\documentclass[letter,scriptaddress,twocolumn,showkeys]{revtex4-1}
\usepackage[latin9]{inputenc}

\usepackage{amsmath}
\usepackage{subfigure} 

\usepackage{tikz}
\usetikzlibrary{arrows}
\usetikzlibrary{positioning,automata}
%
%
%


\usepackage{xy}
\xyoption{matrix}
\xyoption{frame}
\xyoption{arrow}
\xyoption{arc}

\usepackage{ifpdf}
\ifpdf
\else
\PackageWarningNoLine{Qcircuit}{Qcircuit is loading in Postscript mode.  The Xy-pic options ps and dvips will be loaded.  If you wish to use other Postscript drivers for Xy-pic, you must modify the code in Qcircuit.tex}
\xyoption{ps}
\xyoption{dvips}
\fi

\entrymodifiers={!C\entrybox}

\newcommand{\qw}[1][-1]{\ar @{-} [0,#1]}
\newcommand{\qwx}[1][-1]{\ar @{-} [#1,0]}









\newcommand{\control}{*!<0em,.025em>-=-<.2em>{\bullet}}

\newcommand{\ctrl}[1]{\control \qwx[#1] \qw}
\newcommand{\lstick}[1]{*!R!<.5em,0em>=<0em>{#1}}


\newcommand{\Qcircuit}{\xymatrix @*=<0em>}




\newtheorem{theorem}{Theorem}
\newenvironment{proof}[1][Proof]{\noindent\textbf{#1.} }{\ \rule{0.5em}{0.5em}}
\newtheorem{corollary}[theorem]{Corollary}

\begin{document}

\title{Efficient implementation of Quantum circuits with limited qubit interactions}

\author{Stephen Brierley}

\affiliation{DAMTP, Centre for Mathematical Sciences, University of Cambridge, Wilberforce Road, Cambridge CB3 0WA, UK}

\begin{abstract}
The quantum circuit model allows gates between \emph{any} pair of qubits yet physical instantiations allow only limited interactions. We address this problem by providing an interaction graph together with an efficient method for compiling quantum circuits so that gates are applied only locally. The graph requires each qubit to interact with $4$ other qubits and yet the time-overhead for implementing any $n$-qubit quantum circuit is $6\log n$. Building a network of
quantum computing nodes according to this graph enables the network to emulate a single monolithic device with minimal overhead.
\end{abstract}

\maketitle

\section{Introduction}

Just as with their classical counterparts, quantum algorithms will
be compiled into a sequence of elementary physical operations.  Quantum algorithms use arbitrary two-qubit interactions since
in the circuit model, gates can be applied to \emph{any} pair of qubits.
However, after quantum error correction the allowed logical interactions
are limited to a graph that typically has low degree. Beals et al.
give a sequence of SWAP gates permuting the qubits so that every interaction
occurs between neighbours of the host graph \cite{beals+12}. The
time overhead, $T$, depends on the properties of the graph. Two interesting
examples being the $k$-dimensional lattice which for an $n$ qubit
device has overhead $T=O(n^{1/k})$ and the hypercube with overhead
$T=O(\log^{2}n)$ \cite{note1}. Comparing to the solution where each gate is implemented by a separate
permutation, this means that the time to permute all $n$ qubits is
within a logarithmic factor of the time to move just one.

The hypercube is a powerful network with the ability to sort in time
$O(\log^{2}n).$ However, the degree of each node grows as $\log n,$
which for large $n$ could become difficult to implement and means
that new components have to be designed as the device is scaled up.
In addition, implementations of optical switches in a noisy network
model typically suffer losses and so it is appealing to reduce the
degree to a small constant. In this paper we present improvements
to the approach taken by Beals et al. in two directions. We reduce
the required degree of the network to a small constant and at the
same time cut the overhead to $6\log n$ (see Table \ref{tab:results}
for a comparison to previous work). This lowers the cost of implementing
arbitrary quantum algorithms on a physical device and makes the required
networks more realistic. A device built using this architecture is
truly scalable, additional nodes have the same small degree as the
existing qubits. In addition, the lower degree means that we have
reduced the \emph{total} number of connections by a factor $O(\log n)$. 

\begin{table}[ht]
\begin{ruledtabular}
\begin{tabular}{lcccc}
Graph & Degree & $T$ & $S$ & Emulation method\tabularnewline
\hline 
Complete graph & $n$ & 1 & 1 & n/a\tabularnewline
1D n.-n. & 2 & $n^{2}$ & 1 & Move individually\tabularnewline
 &  & $2n-3$ & 1 & Sorting network \cite{beals+12,Hirata+11}\tabularnewline
 &  & $O(1)$ & $n$ & Teleportation \cite{Rosenbaum13}\tabularnewline
2D n.-n. & 4 & $O(\sqrt{n})$ & 1 & Sorting network \cite{beals+12}\tabularnewline
Hypercube & $\log n$ & $O(\log^{2}n)$ & 1 & Sorting network \cite{beals+12}\tabularnewline
 &  &  &  & \tabularnewline
Cyclic butterfly  & 4 & $6\log n$ & 2 & Theorem \ref{main theorem}\tabularnewline
\end{tabular}
\end{ruledtabular}
\caption{\label{tab:results} The time, $T$, and space, $S,$ overhead of embedding a quantum circuit into the graph restricted by the physical implementation. A key limitation being the degree of the graph which corresponds to number of interactions per qubit. 
Previous results have applied to the 1D and 2D nearest-neighbour (n.-n.) and hypercube graph. The final line summarizes the main result of this paper. We show that using a cyclic butterfly network reduces both the degree and time overhead in emulating a quantum circuit on a physically realistic device. }
\end{table}

In section \ref{sec:Hypercubic}, we introduce hypercube-like
networks and in particular, the so called cyclic butterfly network.
We then discuss the properties of a cyclic butterfly graph that we
need for the main result which is presented in Section \ref{sec:theorem}.
Some alternative networks and the application of these ideas to near-term
experiments on noisy network architectures are discussed in the conclusion.

\section{Hypercubic networks \label{sec:Hypercubic}}

We represent a network of qubits as an undirected graph. Nodes correspond
to single qubits, or qubit plus a single ancilla, and edges correspond
to the allowed interactions. The problem of permuting qubits is then
similar to routing packets of information in a synchronous parallel
computer. SWAP gates exchange quantum information between two nodes
or move a quantum state into a node provided there is an available
ancilla qubit in the state $|0\rangle$. In comparison to parallel
classical computing, the parameters we are interested in are somewhat
different. For example, we will think of each node as a \emph{single}
(or pair of) qubit(s) rather than a computing node capable of complex
operations. We clearly distinguish between the off-line classical
computation which is essentially free (provided it is poly-time) from
the on-line quantum computing. We also impose the restriction that
no two 'packets' can be stored at a single node; there is no 'buffering'
space in a single qubit. 

The quantum computer is required to work synchronously at the logically
level - of course at the physical scale, entanglement generation or
magic state distillation will be probabilistic and gate times will
vary. We do not address these issues here but rather assume that sufficient
physical resources allow the system to effectively function as a synchronous
device.

The computational power of a network is typically described in terms
of its ability to emulate the complete graph. Hypercubic networks
are variants of the hypercube that are designed to use nodes with
constant degree yet maintain its computational power to within a small
constant. Since we consider each node as a qubit, the low degree means
that we do not require too many possible interactions with other qubits.
In addition, hypercubic networks typically have a nice scaling property
since we can use the same components in any size quantum computer
(although the distance of the interactions may grow). There are many
hypercubic networks with prominent examples being the butterfly, cube-connected
cycles, Benes network, shuffle-exchange and the de Bruign network
(see for example, ref. \cite{Leighton}). We will use the so called
cyclic butterfly network (defined below) which has two useful properties;
it embeds a Benes network and is invariant under cyclic permutations.

\subsection{The cyclic butterfly network \label{sub:cyclic-butterfly}}

The $n=r2^{r}$ nodes of an $r$-dimensional cyclic butterfly network
(also called a wrapped butterfly) can be described in terms of the
rows and columns of an $r\times2^{r}$ array. Each node is labelled
by a pair $(w,i)$ where $w$ is a $r$-bit word corresponding to
one of the $2^{r}$ rows and $i$ labels the column. Two nodes $(w,i)$
and $(v,i+1\mod r)$ are connected by an edge if either they are in
the same row, $w=v$ or if $w$ and $v$ differ by precisely one bit
in position $i$. There are no other connections in the network so
the degree of every node equals $4$. An example of a $n=3\times2^{3}$
node cyclic butterfly network is given in Fig. 1.

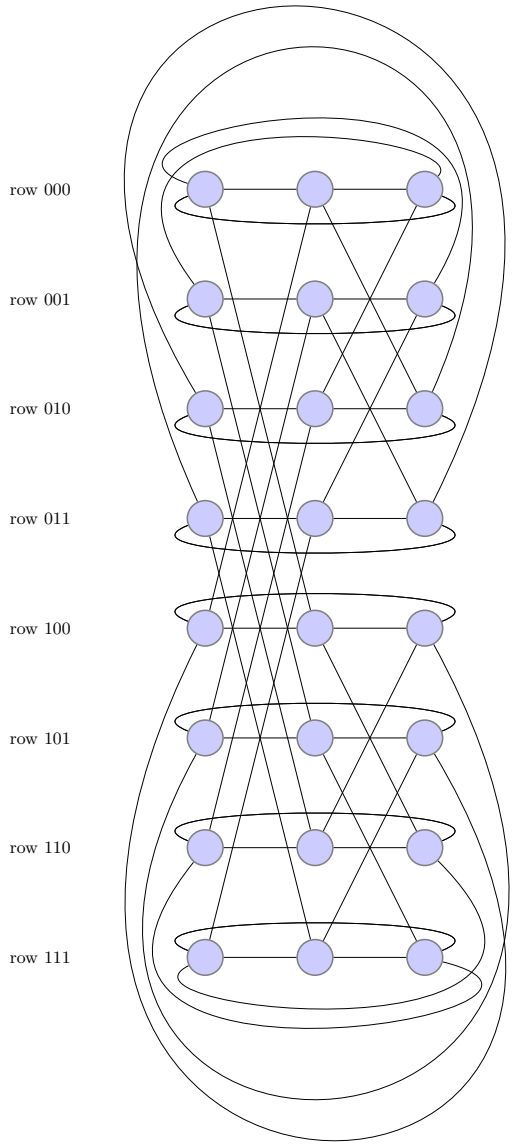
\begin{figure}[ht]
\center
\scalebox{0.73}{
\begin{tikzpicture}[auto, main_node/.style={circle,fill=blue!20, draw=black!50,thick, minimum size=0.65cm}]
  
\foreach \row in {0,2,4,6,8,10,12,14}{
   \foreach \place/\name in {{(0,\row)/a\row},{(2,\row)/b\row},{(4,\row)/c\row}}{     
      \node[main_node] (\name) at \place {}; 
    }
}

\foreach \row in {0,2,4,6}{
   \foreach \source/\dest in {a\row/b\row,b\row/c\row}{   
      \path (\source) edge (\dest);
      \draw (a\row) .. controls (-2,0.8+\row) and (6,0.8+\row) .. (c\row);
   }
}

\foreach \row in {8,10,12,14}{
   \foreach \source/\dest in {a\row/b\row,b\row/c\row}{   
      \path (\source) edge (\dest);
      \draw (a\row) .. controls (-2,-0.8+\row) and (6,-0.8+\row) .. (c\row);
   }
}

\foreach \source/\dest in {a0/b8,a2/b10,a4/b12,a6/b14}{   
      \path (\source) edge (\dest);
}

\foreach \source/\dest in {a8/b0,a10/b2,a12/b4,a14/b6}{   
      \path (\source) edge (\dest);
}

\foreach \source/\dest in {b0/c4,b2/c6,b4/c0,b6/c2}{   
      \path (\source) edge (\dest);
}

\foreach \source/\dest in {b8/c12,b10/c14,b12/c8,b14/c10}{   
      \path (\source) edge (\dest);
}

\draw (a0) .. controls (-2,-1) and (8,-2) .. (c2);
\draw (a2) .. controls (-4,-3) and (8,-1) .. (c0);

\draw (a4) .. controls (-5,-5) and (10,-5) .. (c6);
\draw (a6) .. controls (-6,-6) and (10,-6) .. (c4);

\draw (a8) .. controls (-5,19) and (8,19) .. (c10);
\draw (a10) .. controls (-6,20) and (10,20) .. (c8);

\draw (a12) .. controls (-3,16) and (5,15) .. (c14);
\draw (a14) .. controls (-3,15) and (7,17) .. (c12);

\node[] at (-3,0) {row 111};
\node[] at (-3,2) {row 110};
\node[] at (-3,4) {row 101};
\node[] at (-3,6) {row 100};
\node[] at (-3,8) {row 011};
\node[] at (-3,10) {row 010};
\node[] at (-3,12) {row 001};
\node[] at (-3,14) {row 000};

\end{tikzpicture}
}

\protect\caption{A 3-dimensional cyclic butterfly graph with $n=3\times2^{3}$ nodes
representing a qubit plus its ancilla. The edges represent the allowed
interactions between qubits. }
\end{figure}

The cyclic butterfly network is closely related to the hypercube.
Merging the $r$ nodes in every row into a single node results in
the $2^{r}$ node hypercube. Like the hypercube, the butterfly network
has a simple recursive structure, one $r$-dimensional butterfly contains
two $(r-1)$-dimensional butterflies. 

There are two properties of cyclic butterfly networks that we make
use of in our efficient algorithm for moving qubits. The first property
is that the graph embeds a so called Benes network \cite{Benes65},
meaning that if we traverse the graph with column label increasing
from $i=0\rightarrow r\equiv0$ and then back, $i=r\rightarrow0,$
we can implement any permutation of the $w=0\ldots2^{r}-1$ row labels
without collisions. The second property is that the graph is cyclic:
reordering the rows $i\mapsto i+1\mod r$ results in the same cyclic
butterfly graph. Combining these two properties means that every column
can traverse a Benes network simultaneously. Thus on a cyclic butterfly,
we can permute the $2^{r}$ row elements in every column without collisions.
Note that this is trivially true on a square $\sqrt{n}\times\sqrt{n}$
lattice: we can simultaneously permute the $\sqrt{n}$ entries of
every column independently. The crucial difference is that on a cyclic
butterfly the time taken is only $2r\approx2\log n$ as opposed to
$\sqrt{n}$ on a square lattice.

\section{Algorithm for permuting qubits \label{sec:theorem}}

We now present the main result of the paper, that the butterfly network
can implement any quantum algorithm with an overhead of $6\log n.$

\begin{theorem} \label{main theorem}On a $n$-qubit cyclic butterfly
network, there is a sequence of local gates with depth $6\log n$
such that the qubit at node $a$ is sent to node $\pi(a)$ for all
$a=1,\ldots,n$ and any permutation $\pi:[1,n]\rightarrow[1,n].$
\end{theorem}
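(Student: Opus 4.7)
My plan is to decompose the target permutation $\pi$ as a product of three simpler permutations, $\pi = \pi_3 \circ \pi_2 \circ \pi_1$, where $\pi_1$ and $\pi_3$ move qubits only within their columns and $\pi_2$ moves qubits only within their rows. Each of these three pieces can then be executed using a primitive already analysed in Sec.~\ref{sub:cyclic-butterfly}: an arbitrary within-column permutation can be performed on all columns simultaneously in $2\log n$ steps by combining the Benes embedding with the cyclic invariance of the graph, while an arbitrary within-row permutation can be performed on all rows simultaneously in at most $r \le \log n$ steps using odd-even transposition sort along the length-$r$ cycles. The total depth is therefore bounded by $2\log n + \log n + 2\log n \le 6\log n$.

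The heart of the argument is to produce the decomposition, which I will do via K\"onig's edge-colouring theorem. Define a bipartite multigraph $H$ whose two vertex classes are both copies of the column labels $\{0,1,\dots,r-1\}$, and which contains one edge between left-$i$ and right-$i'$ for every qubit whose source column is $i$ and whose target column is $i'$. Because $\pi$ is a bijection of the $r \times 2^r$ array, each vertex of $H$ has degree exactly $2^r$, so K\"onig's theorem yields a proper edge-colouring of $H$ using $2^r$ colours, which I label by the row strings $w \in \{0,1\}^r$. The intermediate arrangement produced by $\pi_1$ sends each qubit, within its column, to the row whose label equals the colour of the corresponding edge.

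Two properness conditions then need to be checked. At each left-$i$ the colours are distinct, so $\pi_1$ is a valid permutation of the $2^r$ rows within column $i$; at each right-$i'$ the colours are distinct, which means the $r$ qubits sharing an intermediate row (one per source column) carry $r$ distinct target columns and hence can be routed to those columns by a within-row permutation $\pi_2$. After $\pi_2$ every qubit lies in its target column, and since $\pi$ is a permutation the $2^r$ qubits now occupying each column $i'$ have distinct target rows, so $\pi_3$ sorts each column into place using the same Benes-plus-cyclic primitive used for $\pi_1$.

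The main obstacle I anticipate is the combinatorial step: setting $H$ up so that both properness conditions translate precisely into the required intra-column and intra-row conflict-freeness of the three phases, and confirming that the colour-as-intermediate-row assignment meshes with the column-preserving nature of $\pi_1$. The routing primitives themselves are already in hand from the preceding section, so once the K\"onig decomposition is established the remaining work is largely bookkeeping.
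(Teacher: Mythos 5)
Your decomposition is correct, but it is the transpose of the one the paper uses, so the two proofs are genuinely different in their combinatorial core even though they rest on the same two routing primitives. The paper runs a row--column--row schedule: it first permutes \emph{within each row} so that every column acquires all $2^{r}$ destination-row labels, using Hall's theorem on a bipartite graph whose $2\cdot 2^{r}$ vertices are the source and destination \emph{rows} and which is $r$-regular (colours $=$ columns); the single expensive Benes phase sits in the middle and the two cheap $O(r)$-step 1D sorts sit at the ends, for a total of $(2r-3)+2r+(2r-3)=6r-6$. You run the dual column--row--column schedule: K\"onig's theorem applied to a $2^{r}$-regular multigraph on only $2r$ vertices (the source and target \emph{columns}, colours $=$ rows), with the Benes primitive used twice at the ends and the cheap row sort once in the middle, for a total of $2r+r+2r=5r$. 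Both totals are below $6\log n$ since $\log n=r+\log r>r$, and both colouring arguments are really the same fact (a regular bipartite multigraph decomposes into perfect matchings, obtainable in polynomial time off-line, e.g.\ by repeated max-flow). What your version buys is a marginally smaller constant ($5r$ versus $6r-6$ for $r\ge 6$) and a much smaller colouring instance ($2r$ vertices rather than $2^{r+1}$, though with exponentially many colours); what it costs is a second pass through the Benes network, which doubles the use of the per-node ancilla that the collision-free column routing requires. Your bookkeeping of the two properness conditions (each colour appearing exactly once per source column, and each colour class being a perfect matching so that the $r$ qubits landing in a common intermediate row carry distinct target columns) is exactly what is needed, so the argument goes through.
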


\begin{proof} We use the row and column structure of the graph. The
destinations of every qubit are label-ed by $2^{r}$ rows, $w$, and
$r$ columns indexed by $i=0,\ldots,r-1$. We implement a permutation
of all nodes in three steps using this structure: we first permute
the rows, then columns and finally the rows again. The only moves
we are allowed to make is swapping two qubits or moving a qubit from
one node into its neighbours ancilla. In particular, no two qubits
can occupy the same node in a single step.

We first permute the entries in each row in such a way that the row
destination of every qubit in each column become distinct i.e. after
permuting rows, column $i,$ contains every word $w=0,\ldots,2^{r}-1$
for all $i=0,\ldots,r-1$. This is made possible by Hall's Matching
Theorem \cite{Hall35} - also called Hall's marriage theorem as it
allows two groups of men and women to happily marry. A matching in
a graph is a set of edges that have no common vertices. Hall's theorem
gives a necessary and sufficient condition for finding a matching
and is commonly used in routing problems.

We use the permutation $\pi$ to construct a bipartite ``routing
graph\textquotedblright{} $(U,V,E)$ containing $22^{r}$ nodes $U=\{u_{1},\ldots,u_{2^{r}}\}$
and $V=\{v_{1},\ldots,v_{2^{r}}\}$ and $r2^{r}$ edges $U=\{e_{1},\ldots,e_{r2^{r}}\}.$
The $U$ nodes represent the original row location of each qubit and
the $V$ nodes are their destination rows. If a qubit in row $u_{i}$
has a destination row $v_{j}$ we add the edge $(u_{i},v_{j})$ so
that there are $r$ edges for every node in $U$ and $V$. 

Hall's Matching Theorem then tells us that we can $r-$colour the
edges so that no colour is used twice at any node. We can use the
Ford-Fulkerson algorithm to find the matching by reducing the problem
to a maximum-flow problem \cite{Cormen+09}. We add two nodes $s$
and $t$ to the graph and connect $s$ to everything in $U$ and $t$
to everything in $V$. Since each node has unit capacity, a matching
is equivalent to the maximum flow from $s$ to $t$. The classical
computation of the Ford-Fulkerson algorithm is bounded by $O(|U||E|)=O(n^{2})$
\cite{Ford+56}. Having coloured the edges, we now know how to permute
the row elements; an operation we can implement in time $2r-3$ using
an insertion sorting network since each row is a 1D nearest neighbour
graph (see Appendix). 

The $r-$colouring implies that in every column, $i$, each row label
appears exactly once. Using the Benes and pipe-lining properties of
the butterfly network discussed in Sec \ref{sub:cyclic-butterfly},
we can sort every column according to the row labels in $2r$ steps.
In the first $r$ steps, the qubits increment $i\mapsto i+1\mod r$,
then in the final $r$ step the rows move in the opposite direction
$i\mapsto i-1\mod r$. Using a single ancilla at each node the time
cost is $2r$.

The final part of the algorithm is to permute the rows according to
the column labels. Since the destination column labels are now all
distinct, this is possible without collisions using insertion sort.

The total time overhead is thus $T=(2r-3)+(2r)+(2r-3)<6\log n$ as
claimed.\end{proof}

\begin{corollary} \label{corollary} A quantum computer whose $n$
logical qubits are connected according to the cyclic butterfly network
can implement any quantum algorithm with a time and space overhead
of $T=6\log n$ and $S=2$ respectively. \end{corollary}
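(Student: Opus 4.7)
The plan is to reduce the execution of an arbitrary quantum algorithm to a sequence of permutation problems to which Theorem~\ref{main theorem} applies, and to read the space bound off directly from the graph construction of Section~\ref{sec:Hypercubic}.

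First I would write the input circuit as a sequence of time steps $G_1, G_2, \ldots, G_D$, where each $G_k$ is a parallel layer of one- and two-qubit gates with pairwise disjoint supports. The two-qubit gates of $G_k$ specify a partial matching $M_k$ of logical qubits that must be made locally adjacent in the butterfly before $G_k$ can be applied; any one-qubit gates can simply be executed in place wherever their qubit happens to sit.

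Next I would fix a convenient perfect matching $N$ of the cyclic butterfly (such a matching exists because the graph is $4$-regular on the even number $r 2^r$ of vertices). For each layer $k$, I extend $M_k$ arbitrarily to a full permutation $\pi_k$ that maps the pairs in $M_k$ onto edges of $N$, and then invoke Theorem~\ref{main theorem} on $\pi_k$ to realise the required rearrangement in depth $6\log n$. Once each interacting pair lies on a single edge of $N$, every gate of $G_k$ is executed in a single parallel step of local interactions. The per-layer cost is therefore $6\log n + O(1)$, and after $G_k$ the qubits are simply in a permuted configuration that serves as the input to the construction of $\pi_{k+1}$. This gives the claimed multiplicative time overhead $T = 6\log n$.

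For the space overhead, recall from the description in Section~\ref{sub:cyclic-butterfly} that each node of the cyclic butterfly is already defined to host a logical qubit together with a single ancilla initialised to $\ket{0}$. That ancilla is precisely what the column-sorting and move sub-routines inside the proof of Theorem~\ref{main theorem} rely on, and no further workspace is needed, so $S = 2$. The main obstacle is really only book-keeping: one must verify that each $\pi_k$ can be constructed off-line in classical polynomial time (which follows from the Hall's theorem plus Ford--Fulkerson argument already used in the theorem) and that no hidden cost accumulates across layers. Since Theorem~\ref{main theorem} is oblivious to the starting configuration of the qubits, the reduction goes through with exactly the advertised constants.
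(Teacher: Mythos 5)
Your proposal is correct and follows essentially the same route as the paper: each circuit layer defines a permutation sending interacting pairs onto adjacent nodes, Theorem~\ref{main theorem} realises that permutation in depth $6\log n$, and $S=2$ comes from the one ancilla already present at each node. The only extra detail you add is justifying the target placement via a fixed perfect matching of the butterfly (your appeal to $4$-regularity plus even order is not by itself a valid general argument for a perfect matching, but an explicit matching, e.g.\ pairing nodes along straight edges between consecutive columns, is easy to exhibit here), which the paper simply asserts.
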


\begin{proof} Each time-step in a quantum circuit consists of up
to $n/2$ two-qubit gates. The gates define the permutation, $\pi$,
used in Theorem 1. We place the destination of each pair of qubits
involved in a gate so that they are neighbours in the cyclic-butterfly
graph. The proof of Theorem 1 provides an efficient method to construct
a sequence of gates implementing the permutation. Every time step
requires one permutation of the qubits so the time and space overhead
is precisely that given in Theorem 1. \end{proof}

\section{Conclusion}

Quantum computers are fully parallel machines. Every qubit is effectively
a processing node since the identity gate will be error corrected
at a cost similar to other gates. Taking this view has led to the
application of techniques developed for routing in synchronous parallel
(classical) computers. We presented an efficient method for compiling
a quantum circuit onto a cyclic-butterfly network. This improves on
previous results in two respects. The interaction graph has constant
degree and at the same time, the time overhead is a small constant
away from the best possible (the time to move a single qubit). 

There are two alterations to the cyclic butterfly graph one could
make that achieve a trade-off between the cost of building the network and the time-overhead in emulating arbitrary circuits. 
\begin{enumerate}
\item
Replace each node by a ring of 4 nodes, each connected
to one of the previous edges. This reduces the connectivity to 3,
the minimum possible non-trivial degree, whilst increasing the time
overhead by a factor 2. 

\item
Use the $k$-arry cyclic butterfly graph. In this case, the degree increase to $2k$
whilst reducing the overhead to $T=6\log_{k}n.$ 
\end{enumerate}
Combining these two
ideas results in a slightly more efficient solution than the cyclic
butterfly graph. The $k$-arry cyclic butterfly with each node expanded
to a ring of $2k$ nodes has degree 3 and time overhead $T=6k\log_{k}n$,
thus taking $k=3$ is optimal.

The ideas presented here can used when designing the communication
architecture in a noisy network quantum computer. Individual nodes
(or cells) correspond to a small number of physical qubits in a system
such as NV centers in diamond, trapped ions or superconducting devices.
Photonic channels mediate entanglement between two nodes which can
then be distilled to allow inter-node communication (see, for example,
recent experimental results in NV centers \cite{Bernien+13}, superconducting
qubits \cite{Roch+14} and trapped ions \cite{Hucul+15}). Nickerson
et al. show how these resources could be used to implement a fault
tolerant computation via the surface code even in the presence of
noisy photonic links \cite{Nickerson+14}. An alternative approach
would be to take advantage of the cyclic butterfly graph and use CSS
block codes. Steane described how fault tolerant operations can be
performed on separate CSS block codes via ancilla states \cite{Steane99,Brun+15}.
Thus nodes could correspond to a small number of logical qubits, each
in a separate block. The ancilla states would then be distilled using
the photonic channel in much the same way as 4-qubit GHZ states are
required when using the surface code.

\section*{Acknowledgments}

The author would like to thank Aram Harrow and Naomi Nickerson for
suggesting the two alternative graphs given in the conclusion, and
Tom McCourt for discussions when the idea was at an early stage.

\section*{Appendix: Sorting Networks}

A sorting network is designed to sort all possible input sequences
using only comparison gates acting on neighbouring nodes $(x,y)\in G$,
\[
C(x,y)=\begin{cases}
(x,y) & \mbox{if }x>y\\
(y,x) & \mbox{if }x<y.
\end{cases}
\]
That is, $C(x,y)$ swaps the inputs if $x<y$ and leaves them unchanged
otherwise. Sorting networks have been well studied in the classical
literature and examples are know over various graphs \cite{Knuth}.
Two examples are insertion sort and bitonic sort that sort over the
1D nearest-neighbour and hypercubic graphs respectively (see Fig \ref{fig:sorting networks}).
With full parallelism, bubble sort and insertion sort lead to the
same 1D nearest neighbour algorithm and require time $T=2n-3$.

A sorting network over a graph, $G$, provides a method of compiling
any circuit onto $G$. Each time-step in the original circuit defines
a permutation; qubits are moved so that the gates become local in
$G$. The classical compiler then inputs the destinations into the
sorting network and each time the comparison gate implements a SWAP,
the compiler applies a SWAP gate to the corresponding qubits. By construction,
every operation is local in $G$ and once the required gates from
the sorting network have been added, the gates from the time-step
in the original circuit can be enacted on neighbouring qubits. Note
that it is not necessary to have a sorting network that correctly
sorts all inputs, we only need to sort the inputs that appear in the
circuit. In addition, one could use a different network for each time-step.

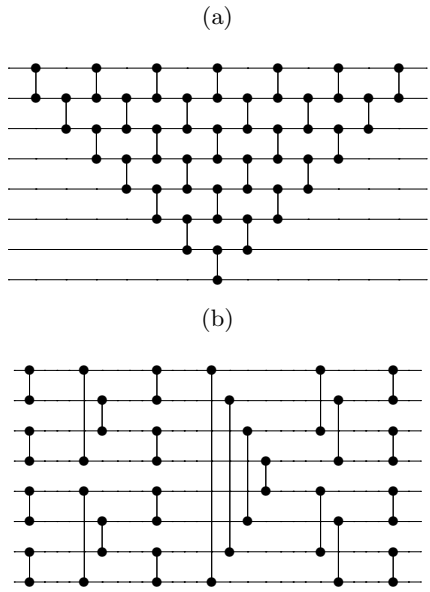
\begin{figure}[ht]
\centering 
(a)
\[ \Qcircuit @R=1em @C1em { 
\lstick{} & \ctrl{1}  & \qw & \ctrl{1} & \qw & \ctrl{1} & \qw  & \ctrl{1} & \qw & \ctrl{1} & \qw  & \ctrl{1} & \qw & \ctrl{1} & \qw\\
\lstick{} & \ctrl{-1} & \ctrl{1}  & \ctrl{-1} & \ctrl{1} & \ctrl{-1} & \ctrl{1} & \ctrl{-1}  & \ctrl{1} & \ctrl{-1} & \ctrl{1} & \ctrl{-1}  & \ctrl{1} & \ctrl{-1} & \qw \\
\lstick{} & \qw & \ctrl{-1} & \ctrl{1}  & \ctrl{-1} & \ctrl{1} & \ctrl{-1} & \ctrl{1} & \ctrl{-1}  & \ctrl{1} & \ctrl{-1} & \ctrl{1} & \ctrl{-1}  & \qw & \qw \\
\lstick{} & \qw & \qw  & \ctrl{-1} & \ctrl{1} & \ctrl{-1} & \ctrl{1} & \ctrl{-1}  & \ctrl{1} & \ctrl{-1} & \ctrl{1} & \ctrl{-1} & \qw & \qw & \qw \\
\lstick{} & \qw & \qw  &\qw & \ctrl{-1} & \ctrl{1} & \ctrl{-1} & \ctrl{1}  & \ctrl{-1} & \ctrl{1} & \ctrl{-1} & \qw & \qw& \qw & \qw \\
\lstick{} & \qw & \qw  & \qw & \qw & \ctrl{-1} & \ctrl{1} & \ctrl{-1}  & \ctrl{1} & \ctrl{-1}& \qw & \qw & \qw& \qw & \qw \\
\lstick{} & \qw & \qw  & \qw & \qw & \qw & \ctrl{-1} & \ctrl{1}  & \ctrl{-1} & \qw & \qw & \qw & \qw& \qw & \qw \\
\lstick{} & \qw & \qw  & \qw & \qw & \qw & \qw & \ctrl{-1}  & \qw & \qw & \qw & \qw & \qw& \qw & \qw \\
} \]

(b)
\[ 
\Qcircuit @R=1em @C0.5em { 
\lstick{} & \ctrl{1}  & \qw & \qw & \qw& \ctrl{3} & \qw  & \qw & \qw & \qw& \ctrl{1}  & \qw & \qw & \qw& \ctrl{7} & \qw & \qw & \qw  & \qw & \qw & \qw& \ctrl{2} & \qw  & \qw & \qw & \qw& \ctrl{1}  & \qw & \qw \\ 
\lstick{} & \ctrl{-1}  & \qw & \qw & \qw& \qw & \ctrl{1}  & \qw & \qw & \qw& \ctrl{-1}  & \qw & \qw & \qw& \qw & \ctrl{5} & \qw & \qw  & \qw & \qw & \qw& \qw & \ctrl{2}  & \qw & \qw & \qw& \ctrl{-1}  & \qw & \qw \\ 
\lstick{} & \ctrl{1}  & \qw & \qw & \qw& \qw & \ctrl{-1}  & \qw & \qw & \qw& \ctrl{1}  & \qw & \qw & \qw& \qw & \qw & \ctrl{3} & \qw  & \qw & \qw & \qw& \ctrl{-2} & \qw  & \qw & \qw & \qw& \ctrl{1}  & \qw & \qw \\ 
\lstick{} & \ctrl{-1}  & \qw & \qw & \qw& \ctrl{-3} & \qw  & \qw & \qw & \qw& \ctrl{-1}  & \qw & \qw & \qw& \qw & \qw & \qw & \ctrl{1}  & \qw & \qw & \qw& \qw & \ctrl{-2}  & \qw & \qw & \qw& \ctrl{-1}  & \qw & \qw \\ 
\lstick{} & \ctrl{1}  & \qw & \qw & \qw& \ctrl{3} & \qw  & \qw & \qw & \qw& \ctrl{1}  & \qw & \qw & \qw& \qw & \qw & \qw & \ctrl{-1}  & \qw & \qw & \qw& \ctrl{2} & \qw  & \qw & \qw & \qw& \ctrl{1}  & \qw & \qw \\ 
\lstick{} & \ctrl{-1}  & \qw & \qw & \qw& \qw & \ctrl{1}  & \qw & \qw & \qw& \ctrl{-1}  & \qw & \qw & \qw& \qw & \qw & \ctrl{-3} & \qw  & \qw & \qw & \qw& \qw & \ctrl{2}  & \qw & \qw & \qw& \ctrl{-1}  & \qw & \qw \\ 
\lstick{} & \ctrl{1}  & \qw & \qw & \qw& \qw & \ctrl{-1}  & \qw & \qw & \qw& \ctrl{1}  & \qw & \qw & \qw& \qw & \ctrl{-5} & \qw & \qw  & \qw & \qw & \qw& \ctrl{-2} & \qw  & \qw & \qw & \qw& \ctrl{1}  & \qw & \qw \\ 
\lstick{} & \ctrl{-1}  & \qw & \qw & \qw& \ctrl{-3} & \qw  & \qw & \qw & \qw& \ctrl{-1}  & \qw & \qw & \qw& \ctrl{-7} & \qw & \qw & \qw  & \qw & \qw & \qw& \qw & \ctrl{-2}  & \qw & \qw & \qw& \ctrl{-1}  & \qw & \qw \\ 
} 
\]

\protect\caption{\label{fig:sorting networks}Two examples of sorting networks on 8
inputs: (a) the insertion sort over a 1D nearest neighbour graph which
sorts in time $T=2n-3$, and (b) the bitonic sort over the hypercube
that requires time $T=\tfrac{1}{2}\log n(\log n+1)$.}
\end{figure}

\end{document}